\newcommand{\ii}{{\rm i}}
\newcommand{\E}{\mathbb E}
\newtheorem{Theorem}{Theorem}[section]
\newtheorem{Corollary}[Theorem]{Corollary}
 { \theoremstyle{definition}
\newtheorem{Remark}[Theorem]{Remark} }
\newcommand{\ai}{\mathfrak a}
\newcommand{\Z}{\mathcal Z}
\begin{document}


\renewcommand{\thefootnote}{$\star$}

\newcommand{\arXivNumber}{1608.01557}

\renewcommand{\PaperNumber}{102}

\FirstPageHeading

\ShortArticleName{Moments Match between the KPZ Equation and the Airy Point Process}

\ArticleName{Moments Match between the KPZ Equation\\ and the Airy Point Process\footnote{This paper is a~contribution to the Special Issue on Asymptotics and Universality in Random Matrices, Random Growth Processes, Integrable Systems and Statistical Physics in honor of Percy Deift and Craig Tracy. The full collection is available at \href{http://www.emis.de/journals/SIGMA/Deift-Tracy.html}{http://www.emis.de/journals/SIGMA/Deift-Tracy.html}}}

\Author{Alexei BORODIN~$^{\dag\ddag}$ and Vadim GORIN~$^{\dag\ddag}$}

\AuthorNameForHeading{A.~Borodin and V.~Gorin}

\Address{$^\dag$~Department of Mathematics, Massachusetts Institute of Technology, USA}
\EmailD{\href{mailto:borodin@math.mit.edu}{borodin@math.mit.edu}, \href{mailto:vadicgor@gmail.com}{vadicgor@gmail.com}}

\Address{$^\ddag$~Institute for Information Transmission Problems of Russian Academy of Sciences, Russia}

\ArticleDates{Received August 09, 2016, in f\/inal form October 21, 2016; Published online October 26, 2016}

\Abstract{The results of Amir--Corwin--Quastel, Calabrese--Le~Doussal--Rosso, Dotsenko, and Sasamoto--Spohn imply that the one-point distribution of the solution of the KPZ equation with the narrow wedge initial condition coincides with that for a~multiplicative statistics of the Airy determinantal random point process. Taking Taylor coef\/f\/icients of the two sides yields moment identities. We provide a simple direct proof of those via a~combinatorial match of their multivariate integral representations.}

\Keywords{KPZ equation; Airy point process}

\Classification{60B20; 60H15; 33C10}

\renewcommand{\thefootnote}{\arabic{footnote}}
\setcounter{footnote}{0}

\section{Introduction}

Since Tracy and Widom's discovery of the ASEP solvability eight years ago \cite{TW2,TW1, TW3}, the relationship between the ``determinantal'' and ``non-determinantal'' solvable models in the \mbox{$(1+1)$}-dimensional KPZ (Kardar--Parisi--Zhang) universality class has largely remained a mystery. One step towards solving this mystery is the celebrated result of Amir--Corwin--Quastel \cite{ACQ}, Calabrese--Le~Doussal--Rosso \cite{CDR}, Dotsenko \cite{D}, and Sasamoto--Spohn \cite{SS}, that provides an explicit expression for the distribution (or its Laplace transform) of one-point value of the solution of the KPZ equation with the so-called narrow wedge initial condition. It can be re-interpreted as saying that this Laplace transform coincides with the average of a multiplicative statistics of the Airy determinantal random point process. Although this restatement seems to be known to experts, we couldn't f\/ind it in this form in the literature, so we give an exact formulation as Theorem~\ref{Theorem_Laplace} below. Such a result is very useful as it immediately implies that this solution of the KPZ equation asymptotically at large times has the GUE Tracy--Widom distribution, which is a display of the KPZ universality, cf.\ Corwin's survey~\cite{Ivan_review}.

Finding other facts of similar nature has been a challenge so far.

Imamura and Sasamoto \cite{IS} proved a similar statement for the O'Connell--Yor semi-discrete Brownian directed polymer. Unfortunately, the associated determinantal point process was not governed by a positive measure. Still, taking the edge limit of this process, they were able to recover Theorem~\ref{Theorem_Laplace}. Another representation of the Laplace transform of the O'Connell--Yor partition function as the average of a multiplicative functional over a signed determinantal point process can be found in~\cite{OC}.

Very recently, one of the authors found in~\cite{B} an identity that relates a single point height distribution of the (higher spin inhomogeneous) stochastic six vertex model in a quadrant on one side, and multiplicative statistics of the Macdonald measures on the other. The ASEP limit of this identity was worked out in~\cite{BO}. Taking the KPZ limit of both leads to Theorem~\ref{Theorem_Laplace} again.

The goal of this note is to look at Theorem~\ref{Theorem_Laplace} from the point of view of moments, rather than the corresponding distributions. One can study both the KPZ equation and the Airy point process via their exponential moments. Those are computationally tractable but they are of limited mathematical use because the corresponding moment problems are indeterminate. Still, on the KPZ side physicists were able to consistently use the moments to access the distributions via the (non-rigorous) replica trick, see~\cite{D} and~\cite{CDR} for early examples.

Our Theorem \ref{Theorem_moments} proves the moments identity that corresponds to Theorem~\ref{Theorem_Laplace}. The argument is a combinatorial match between known multivariate integral representations of the moments on both sides. Interestingly, these integral representations were known long before~\cite{ACQ,CDR,D, SS}, but their similarity had not been exploited. We are hoping that the moments point of view will be benef\/icial for f\/inding other similar correspondence.

We did attempt to extend the moments correspondence to a two-point identity, as integral representations on both side are again known. Unfortunately, we have not been successful in that so far.

\section{The one-point equality}

Let $\ai_1\ge\ai_2\ge\ai_3\ge\cdots$ be points of the Airy point process\footnote{It should not be confused with Airy$_2$ process; the latter is a random continuous curve rather than a random point process.} at $\beta=2$ (see, e.g., \cite{AGZ,Forrester}) which is a determinantal point process on $\mathbb R$ with correlation kernel
\begin{gather*}
 K_{\rm Airy}(x,y)=\frac{\operatorname{Ai}(x) \operatorname{Ai}'(y)-\operatorname{Ai}'(x) \operatorname{Ai}(y)}{x-y} = \int_{0}^\infty \operatorname{Ai}(x+a) \operatorname{Ai}(y+a) da.
\end{gather*}
Here $\operatorname{Ai}(x)$ is the Airy function.

From the opposite direction, let $\Z(T,X)$ denote the solution of the stochastic heat equation (see, e.g., \cite{Ivan_review, Quastel})
\begin{gather*}
 \frac{\partial}{\partial T} \Z=\frac{1}{2} \frac{\partial^2}{\partial X^2} \Z - \Z \dot{\mathcal W},\qquad \Z(0,X)=\delta(X=0),
\end{gather*}
where $\dot{\mathcal W}$ is the space--time white noise. $H:=-\log(\Z)$ is the Hopf--Cole solution of the Kardar--Parisi--Zhang stochastic partial dif\/ferential equation with the narrow-edge initial data.

The following statement is a reformulation of results of \cite{ACQ, CDR, D, SS}.
\begin{Theorem} \label{Theorem_Laplace} Set $\frac{T}{2}=C^3$. Then for each real $C>0$, $u\ge 0$ we have
\begin{gather}\label{eq_Laplace}
 \E_{\mathrm{Airy}}\left[ \prod_{k=1}^{\infty} \frac{1}{1+u\exp\left(
 C\ai_k\right)}\right]=\E_{\mathrm{KPZ}}\left[ \exp\left({-}u \Z(T,0) \exp\left(\frac{T}{24}\right)\right)\right].
\end{gather}
\end{Theorem}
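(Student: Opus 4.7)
The theorem is stated as a reformulation of the Fredholm determinant results of \cite{ACQ,CDR,D,SS}, so the plan is to rewrite both sides of \eqref{eq_Laplace} as the same Fredholm determinant on $L^2(\mathbb{R}_+)$ and then compare.

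First, I would rewrite the left-hand side via the general identity
\begin{gather*}
 \E\left[\prod_k \bigl(1 - g(x_k)\bigr)\right] = \det(I - g K),
\end{gather*}
valid for any determinantal point process with trace-class kernel $K$ and any measurable $g\in[0,1]$. Applied to the Airy process with $g(x) = \frac{u e^{Cx}}{1+u e^{Cx}}$, whose complement equals $(1+u e^{Cx})^{-1}$, it yields $\det(I - g K_{\mathrm{Airy}})_{L^2(\mathbb{R})}$. Inserting the rank-one-in-$a$ factorization $K_{\mathrm{Airy}}(x,y) = \int_0^\infty \operatorname{Ai}(x+a)\operatorname{Ai}(y+a)\,da$ and applying $\det(I - AB) = \det(I - BA)$ transforms this into $\det(I - \widetilde K)_{L^2(\mathbb{R}_+)}$ with
\begin{gather*}
 \widetilde K(a,b) = \int_{-\infty}^\infty \operatorname{Ai}(x+a)\,\frac{u e^{Cx}}{1+u e^{Cx}}\,\operatorname{Ai}(x+b)\,dx.
\end{gather*}

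Second, I would invoke the Fredholm determinant formula of \cite{ACQ,CDR,D,SS} for the Laplace transform appearing on the right-hand side of \eqref{eq_Laplace}. Under the identification $T/2 = C^3$ and a rescaling of the inner variable by $C$, their kernel becomes exactly $\widetilde K$ above; in particular, the Fermi-like factor $\frac{u e^{Cx}}{1+u e^{Cx}}$ arises from the standard factor $(1 + e^{-Ct + \log u})^{-1}$ after relabeling. Matching the two $\det(I - \widetilde K)_{L^2(\mathbb{R}_+)}$ proves \eqref{eq_Laplace}.

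The main obstacle is not conceptual but notational: the four cited derivations of the KPZ Laplace transform use different scalings and different placements of the spectral parameter, so the bookkeeping that puts them into the uniform form above requires some care. A minor additional check is the almost-sure convergence of the infinite product on the left, which follows from the standard large-$k$ asymptotics $\ai_k \sim -(3\pi k/2)^{2/3}$ and the resulting exponential decay of $u e^{C \ai_k}$; the hypotheses $C>0$ and $u\ge 0$ guarantee that all expressions are real, finite, and agree analytically.
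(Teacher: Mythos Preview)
Your proposal is correct and follows essentially the same route as the paper: both sides are rewritten as Fredholm determinants and matched. The paper writes out the Fredholm series expansions for the KPZ Laplace transform on $L^2(\mathbb{R}_+)$ (with kernel $K_u$) and for the Airy multiplicative functional on $L^2(\mathbb{R})$ (with kernel $g\cdot K_{\rm Airy}$), and then observes that a change of variables identifies the two; your use of the factorization $K_{\rm Airy}=A^*A$ together with $\det(I-AB)=\det(I-BA)$ is exactly the operator-level statement of that same change of variables, and is arguably a cleaner way to phrase the step.
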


On the other hand, we could not f\/ind the following statement in the existing literature.

\begin{Theorem} \label{Theorem_moments} Set $\frac{T}{2}=C^3$, and let $h_k(x_1,x_2,\dots)=\sum\limits_{i_1\le i_2\le \dots \le i_k} x_{i_1} x_{i_2} \cdots x_{i_k}$ be the complete symmetric homogeneous function in variables $x_1,x_2,\dots$. Then for each $C>0$, $k=1,2,\dots$ we have
\begin{gather}\label{eq_moments}
 \E_{\mathrm{Airy}}\left[ h_k\left(\exp\left(
 C\ai_1\right),\exp\left( C\ai_2\right),\dots \right)\right]=\E_{\mathrm{KPZ}}\left[
 \frac{\Z(T,0)^k}{k!}\right] \exp\left(k \frac{T}{24}\right).
\end{gather}
\end{Theorem}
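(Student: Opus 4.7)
The natural approach is to take Taylor coefficients in $u$ of the identity \eqref{eq_Laplace}. Using $\prod_i (1+u y_i)^{-1} = \sum_{k\ge 0}(-u)^k h_k(y_1,y_2,\ldots)$ on the Airy side and $\exp(-uv) = \sum_k (-u)^k v^k/k!$ on the KPZ side, formal matching of the coefficients of $u^k$ yields \eqref{eq_moments} immediately. The obstruction, as noted in the introduction, is that the KPZ moments $\E[\Z(T,0)^k]$ grow so fast that the RHS of \eqref{eq_Laplace} has zero radius of convergence in $u$ (the moment problem is indeterminate), so this term-by-term matching cannot be justified by convergent power series. We therefore derive independent multivariate integral representations for each side of \eqref{eq_moments} and match them directly, as the introduction already hints.

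For the Airy side, I would apply the Fredholm expansion of the generating function: with $\phi(x) = 1/(1+u e^{Cx})$ one has $\E_{\mathrm{Airy}}\bigl[\prod_i \phi(\ai_i)\bigr] = \det(I + (\phi-1)K_{\mathrm{Airy}})$, and extracting the coefficient of $(-u)^k$ gives
\begin{gather*}
\E_{\mathrm{Airy}}\bigl[h_k\bigl(e^{C\ai_1},e^{C\ai_2},\ldots\bigr)\bigr] = \sum_{j=1}^{k}\frac{1}{j!}\sum_{\substack{m_1,\ldots,m_j\ge 1 \\ m_1+\cdots+m_j=k}}\int_{\mathbb R^j}\det\bigl[K_{\mathrm{Airy}}(x_a,x_b)\bigr]_{a,b=1}^{j}\prod_{a=1}^j e^{m_a C x_a}\,dx_a.
\end{gather*}
Then I would substitute the double-contour representation of $K_{\mathrm{Airy}}$ coming from $\operatorname{Ai}(x) = (2\pi\ii)^{-1}\int e^{z^3/3 - zx}\,dz$ along Hankel-type contours, expand the determinant by Leibniz, and carry out the $x_a$-integrations. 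Each such integration collapses, via a delta-function-like constraint of the form $z_a + w_{\sigma(a)} = m_a C$, to a pure $j$-fold contour integral; after eliminating the $w$-variables, the cubic identity $(z^3 + (mC-z)^3)/3 = (mC)^3/3 - (mC)^2 z + mC\,z^2$ turns the exponent into a quadratic polynomial in each $z$, producing Gaussian factors whose coefficients scale with $C$.

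For the KPZ side, I would invoke the classical Bethe-ansatz/delta-Bose-gas contour integral formula for $\E[\Z(T,0)^k]$ with narrow-wedge initial data (going back to the replica computations of Kardar and Dotsenko, and expressible in the form used for instance in~\cite{BO}), which writes the moment as a sum over partitions or compositions of $k$ of nested contour integrals with Gaussian factor $e^{T z_A^2/2}$ and rational ``two-body'' weights. Multiplying by $e^{kT/24}/k!$ and using $T/2 = C^3$ should reproduce exactly the contour integral derived on the Airy side. The main obstacle is this final combinatorial match: both expressions decompose as sums over compositions (or partitions) of $k$, and identifying term-by-term the Gaussian exponents, rational weight factors, residues picked up in opening the nested contours, and the combinatorial multiplicities all require careful bookkeeping. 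The factor $e^{kT/24}$ and the identification $T/2 = C^3$ are precisely what is needed for the Gaussian exponents and normalization constants from the two sides to align.
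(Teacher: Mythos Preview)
Your overall strategy coincides with the paper's: both sides of \eqref{eq_moments} are written as sums over partitions (equivalently, compositions) of $k$ of explicit contour integrals, and the identity is established by matching these integrals term by term. On the KPZ side you and the paper invoke the same delta-Bose-gas nested contour formula (the paper cites \cite{BBC} rather than \cite{BO}) and then open the contours to the imaginary axis, producing the sum over $\lambda\vdash k$ in \eqref{eq_Bose_moment_expanded}.

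The difference is in how the Airy side is handled. The paper does not substitute the Hankel contour representation of $\operatorname{Ai}$ and integrate out the $x_a$'s via delta constraints. Instead it uses Okounkov's closed-form Laplace transform identity \cite[Lemma~2.6]{Ok2},
\[
\int_{-\infty}^{+\infty} e^{xz}\operatorname{Ai}(z+a)\operatorname{Ai}(z+b)\,dz=\frac{1}{2\sqrt{\pi x}}\exp\Bigl(\tfrac{x^3}{12}-\tfrac{a+b}{2}x-\tfrac{(a-b)^2}{4x}\Bigr),
\]
applies it to each factor in the cyclic product $\prod_i K_{\mathrm{Airy}}(z_i,z_{i+1})$, and then reintroduces Gaussian integration variables to rewrite the resulting expression as a determinant $\det\bigl[\frac{1}{(-\ii z_i+c_i/2)+(\ii z_j+c_j/2)}\bigr]$. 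A linear change of variables then lands exactly on \eqref{eq_Bose_moment_expanded}. This route is entirely convergent at every step.

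Your proposed route via $\int_{\mathbb R} e^{(m_aC-z_a-w_{\sigma(a)})x_a}\,dx_a$ producing a delta constraint is the standard physics heuristic for the same computation, and after the constraint your cubic-to-quadratic identity is exactly what the paper's exponent simplification \eqref{eq_exponents_simple} encodes. The gap to watch is that this $x_a$-integral diverges as written because $m_aC>0$ is real; to make it rigorous you must first shift the $z$- and $w$-contours so that $\Re(z_a+w_{\sigma(a)})=m_aC$ and only then interpret the $x_a$-integral as $2\pi\delta$. Okounkov's identity packages precisely this contour bookkeeping into a single convergent formula, which is why the paper's argument is shorter and avoids the justification you would otherwise owe.
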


Our proof of Theorem \ref{Theorem_moments} is based on direct comparison of contour integral formulas: for the right-hand side of~\eqref{eq_moments} such formula is known as a solution for the attractive delta Bose gas equation, cf.\ the discussion in \cite[Section~6.2]{BigMac}, while for the left-hand side it can be computed through the Laplace transform of the correlation kernel $K_{\rm Airy}$.

\begin{Remark} Expanding formally the result of Theorem \ref{Theorem_Laplace} into power series in $u$ and evaluating the coef\/f\/icients, one gets the result of Theorem~\ref{Theorem_moments} and vice versa. However, these theorems are not equivalent: The $u$ power series expansion of the left-hand side of~\eqref{Theorem_Laplace} fails to converge for any $u\ne 0$. Below we provide two dif\/ferent proofs for Theorems~\ref{Theorem_Laplace} and~\ref{Theorem_moments}, respectively.
\end{Remark}

The following corollary is present in~\cite{ACQ, CDR, D, SS}, but it seems reasonable for us to give a~proof using Theorem~\ref{Theorem_Laplace} above only, without appealing to the explicit evaluation of either side.

\begin{Corollary} The following convergence in distribution holds:
\begin{gather*} 
 \lim_{T\to +\infty}\left[ \left(\frac{2}{T}\right)^{1/3} \left( \ln(Z(T,0))+ \frac{T}{24}\right)\right]=\ai_1.
\end{gather*}
\end{Corollary}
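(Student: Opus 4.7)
The plan is to extract the distribution function of $\ai_1$ from both sides of \eqref{eq_Laplace} by letting the free parameter $u$ scale with $T$ so that each smooth factor $1/(1+u e^{C\ai_k})$ sharpens into an indicator. Substitute $u = e^{-Cs}$ for arbitrary real $s$; since $T/2=C^3$ gives $(2/T)^{1/3}=1/C$, writing $Y_T := (2/T)^{1/3}(\ln \Z(T,0) + T/24)$ yields $\Z(T,0)e^{T/24} = e^{CY_T}$, so the right-hand side of \eqref{eq_Laplace} becomes $\E_{\rm KPZ}[\exp(-e^{C(Y_T - s)})]$, while the left-hand side equals $\E_{\rm Airy}[\prod_k (1 + e^{C(\ai_k - s)})^{-1}]$.

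On the Airy side, each factor lies in $[0,1]$ and, as $C \to \infty$, converges to $\mathbf{1}_{\ai_k < s}$ wherever $\ai_k \neq s$. Since $\ai_k \to -\infty$ fast enough that the infinite product is a.s.\ convergent, the product is uniformly bounded by $1$ and converges almost surely to $\mathbf{1}_{\ai_1 < s}$. Dominated convergence then gives $\E_{\rm Airy}[\prod_k (1+e^{C(\ai_k - s)})^{-1}] \to \Pr(\ai_1 < s)$, and this limit coincides with $G(s) := \Pr(\ai_1 \le s)$ at every continuity point of~$G$.

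On the KPZ side, the function $f_C(y) := \exp(-e^{C(y-s)})$ is decreasing on $\mathbb{R}$, takes values in $[0,1]$, and for any $\epsilon > 0$ and $t_- < s < t_+$ satisfies $f_C(y) \ge 1-\epsilon$ on $(-\infty, t_-]$ and $f_C(y) \le \epsilon$ on $[t_+, \infty)$ once $C$ is large enough. Hence
\[(1-\epsilon)\Pr(Y_T \le t_-) \le \E_{\rm KPZ}[f_C(Y_T)] \le \Pr(Y_T \le t_+) + \epsilon.\]
Combining with the Airy-side limit and sending $T \to \infty$, then $\epsilon \to 0$ and $t_\pm \to s$, one concludes $\Pr(Y_T \le s) \to G(s)$ at every continuity point of $G$, i.e., $Y_T$ converges in distribution to $\ai_1$.

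The main obstacle is the KPZ side: because $Y_T$ itself depends on $T$, one cannot directly apply dominated convergence to $\E_{\rm KPZ}[f_C(Y_T)]$. The sandwich above, which exploits the monotonicity and sharp transition of $f_C$ around $y=s$, is what lets one convert the Laplace-type identity \eqref{eq_Laplace} into convergence of cumulative distribution functions.
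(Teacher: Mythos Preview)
Your argument is correct and follows essentially the same route as the paper's proof: substitute $u=e^{-Cs}$, use bounded convergence on the Airy side to get $\Pr(\ai_1<s)$, and a sharpening-to-indicator argument on the KPZ side (the paper outsources this last step to \cite[Lemma~4.1.39]{BigMac}, which is precisely your sandwich). The one place where the paper is more specific is in justifying why the infinite product tends to~$1$ on the event $\{\ai_1<s\}$: it invokes the almost-sure finiteness of $\sum_k e^{\ai_k}$ (via finiteness of its expectation), which is what your phrase ``$\ai_k\to-\infty$ fast enough'' must be made to mean, since mere convergence of the product for each fixed~$C$ does not by itself force the $C\to\infty$ limit to be~$1$.
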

\begin{proof}
Take $a\in\mathbb R$ and set $u=\exp\bigl({-}(T/2)^{1/3} a\bigr)$. Then the left-hand side of \eqref{eq_Laplace} is
\begin{gather} \label{eq_x0}
 \E\left[ \prod_{k=1}^{\infty} \frac{1}{1+\exp\bigl( (T/2)^{1/3} (\ai_k-a)\bigr)}\right].
\end{gather}
We claim that the random variable under expectation in~\eqref{eq_x0} almost surely converges as $T\to\infty$ to the indicator function of the event $\ai_1<a$. Indeed, if $\ai_1>a$, then the expression in \eqref{eq_x0} converges to $0$ as $T\to\infty$. On the other hand, if $\ai_1<a$, then (taking into the account that $\sum\limits_{k=1}^{\infty} \exp(\ai_k)$ is almost surely f\/inite, as follows from the f\/initeness of its expectation, which is explicitly known, see, e.g., \cite[Section~2.6.1]{Ok2}) the same expression converges to $1$. Since the random variable under expectation is almost surely between $0$ and $1$, the almost sure convergence implies convergence of expectations, and therefore
\begin{gather} \label{eq_x1}
 \lim_{T\to +\infty} \E\left[ \prod_{k=1}^{\infty} \frac{1}{1+\exp\bigl( (T/2)^{1/3} (\ai_k-a)\bigr)}\right]={\rm Prob}(\ai_1<a).
\end{gather}
On the other hand, the right side of \eqref{eq_Laplace} is
\begin{gather} \label{eq_x2}
\E\left[ \exp\left(-\exp\left( (T/2)^{1/3} \left( \frac{\ln(\Z(T,0)) +T/24}{(T/2)^{1/3}} -a \right)\right)\right)\right].
\end{gather}
Observe that for any random variable $\xi$, the expression $\exp\bigl({-}\exp((T/2)^{1/3}(\xi-a))\bigr)$ is almost surely between $0$ and $1$, converges to $0$ as $T\to +\infty$ if $\xi>a$, and converges to $1$ if $\xi<a$. Therefore, using the fact that $\ai_1$ has a continuous distribution, \eqref{eq_x2} as $T\to\infty$ behaves as (see, e.g., \cite[Lemma~4.1.39]{BigMac} for more details)
\begin{gather} \label{eq_x3}
 {\rm Prob} \left(\frac{\ln(\Z(T,0)) +T/24}{(T/2)^{1/3}}<a\right)+o(1).
\end{gather}
Equating \eqref{eq_x1} to \eqref{eq_x3} we are done.
\end{proof}

\section{Proofs of Theorems \ref{Theorem_Laplace} and \ref{Theorem_moments}}

\begin{proof}[Proof of Theorem \ref{Theorem_Laplace}]
The expectation $\E \exp(-u \Z(T,0)$ admits a formula as a Fredholhm determinant, which was discovered in \cite{ACQ, CDR, D, SS}. Following \cite[Section~2.2.1]{BBC}, this formula reads
\begin{gather}\label{eq_KPZ_Laplace}
\E\big[ \exp\bigl({-}u \Z(T,0) \exp(T/24)\bigr) \big]=1+\sum_{L=1}^{\infty} \frac{(-1)^L}{L!}
 \int\limits_0^{\infty}dx_1\cdots \int\limits_0^{\infty}dx_L \det\left[ K_u(x_i,x_j)\right]^{L}_{i,j=1},
\end{gather}
where
\begin{gather*}
 K_u(x,x')=\int_{-\infty}^{\infty} \frac{dr}{1+\frac{1}{u}\exp\big( (T/2)^{1/3} r\big)} \operatorname{Ai}(x-r) \operatorname{Ai}(x'-r).
\end{gather*}
On the other hand, the left-hand side of \eqref{eq_Laplace} is a multiplicative function of a determinantal point process and, therefore, also admits a~Fredholm determinant formula, see, e.g., \cite[equation~(2.4)]{Bor_det}:
\begin{gather}
 \E\left[ \prod_{k=1}^{\infty} \frac{1}{1+u\exp\left(
 C\ai_k\right)}\right]=\det\left[1- \left(1- \frac{1}{1+u\exp(Cr)}\right)
 K_{\rm Airy}(r,r')\right]_{L_2(\mathbb R)}\nonumber\\
 =
1+\sum_{L=1}^{\infty} \frac{(-1)^L}{L!} \int_{-\infty}^{\infty} dy_1\cdots
\int_{-\infty}^{\infty} d y_L \left(\prod_{k=1}^L \frac{1}{1+\frac{1}{u} \exp(-C
y_k)} \right) \det\left[K_{\rm Airy}(y_i,y_j)\right]_{i,j=1}^L.\!\!\!\label{eq_Airy_multiplicative}
\end{gather}
One immediately sees that upon the change of variables $r_i=-y_i$ and identif\/ication $\frac{T}{2}=C^3$, the formulas~\eqref{eq_KPZ_Laplace} and~\eqref{eq_Airy_multiplicative} are the same.
\end{proof}

\begin{proof}[Proof of Theorem \ref{Theorem_moments}] The moments $\E \Z(T,0)^k$ are known through solving the attractive delta Bose gas equation, cf.\ the discussion in \cite[Section~6.2]{BigMac}. Following \cite[Lemma~4.1]{BBC}, we have
\begin{gather}\label{eq_Bose_moment}
 \E \big[\Z(T,0)^k\big] =\int_{a_1-\ii\infty}^{a_1+\ii\infty}\frac{dz_1}{2\pi\ii} \cdots
 \int_{a_k-\ii\infty}^{a_k+\ii \infty} \frac{dz_k}{2\pi\ii} \prod_{1\le A < B \le k}
 \frac{z_A-z_B}{z_A-z_B-1} \cdot \prod_{j=1}^k \exp\left( \frac{T}{2} z_j^2\right),
\end{gather}
where the real numbers $a_1,\dots,a_k$ satisfy $a_1\gg a_2\gg\cdots\gg a_k$. It is convenient for us to modify the contours of integration in~\eqref{eq_Bose_moment} to the imaginary axis $\ii \mathbb R$. One collects certain residues in such a~deformation, and the f\/inal result is read from \cite[equation~(13)]{BBC} to be
\begin{gather}
\E \left[\frac{\Z(T,0)^k}{k!}\right] = \sum_
 {\begin{smallmatrix}\lambda\vdash k\\ \lambda=1^{m_1} 2^{m_2}\cdots
 \end{smallmatrix}} \frac{1}{m_1! m_2! \cdots}
 \int_{-\ii \infty}^{\ii \infty} \frac{dw_1}{2\pi\ii} \cdots \int_{-\ii\infty}^{\ii\infty} \frac{dw_{\ell(\lambda)}}{2\pi\ii}
 \det\left[ \frac{1}{w_j+\lambda_j-w_i}\right]_{i,j=1}^{\ell(\lambda)}\nonumber \\
\hphantom{\E \left[\frac{\Z(T,0)^k}{k!}\right] = }{}
 \times \prod_{j=1}^{\ell(\lambda)} \exp\left( \frac{T}{2} \big(w_j^2 +(w_j+1)^2+\cdots+(w_j+\lambda_j-1)^2 \big)\right),\label{eq_Bose_moment_expanded}
\end{gather}
where $\lambda=(\lambda_1\ge\lambda_2\ge\dots)$ is a partition of $k$ and $\ell(\lambda)$ is the number of non-zero parts $\lambda_j$.

Let us now produce a similar expression for the left-hand side of~\eqref{eq_moments}. Def\/ine the Laplace transform of the correlation functions of the Airy point process through
\begin{gather*}
 R(c_1,\dots,c_n)=\int_{\mathbb R^n} e^{(c \cdot x)} \det [K_{\rm Airy}(x_i,x_j)]_{i,j=1}^n dx_1\cdots dx_n, \qquad c_1,c_2,\dots,c_n>0.
\end{gather*}
The def\/inition of the Airy point process implies that for a partition $\lambda=(\lambda_1\ge\lambda_2\dots\ge\lambda_\ell)=1^{m_1} 2^{m_2}\cdots$ one has
\begin{gather*}
 \E \bigl[m_\lambda(\exp(C\ai_1),\exp(C\ai_2),\dots)\bigr]=\frac{1}{m_1!m_2!\cdots} R(C\lambda_1,\dots,C\lambda_\ell),
\end{gather*}
where $m_\lambda(y_1,y_2,\dots)$ is the monomial symmetric function in variables $y_1,y_2,\dots$, as in \cite[Chapter~I]{Mac}. Expanding $h_k$ into linear
combination of $m_\lambda$'s, we can then write
\begin{gather}\label{eq_h_expansion}
 \E \bigl[h_k(\exp(C\ai_1),\exp(C\ai_2),\dots)\bigr]=\sum_{\begin{smallmatrix}\lambda\vdash k
 \\\lambda=1^{m_1} 2^{m_2}\cdots \end{smallmatrix}} \frac{1}{m_1!m_2!\cdots} R\bigl(C\lambda_1,\dots,C\lambda_{\ell(\lambda)}\bigr),
\end{gather}
where the summation goes over all partitions of $k$. Comparing \eqref{eq_h_expansion} with~\eqref{eq_Bose_moment_expanded}, we see that it remains to identify the contour integrals over imaginary axis in~\eqref{eq_Bose_moment_expanded} with $R\bigl(C\lambda_1,\dots,C\lambda_{\ell(\lambda)}\bigr)$.

The rest is based on the following identity that can be found in \cite[Lemma~2.6]{Ok2}:
\begin{gather*}
 \int_{-\infty}^{+\infty} e^{xz} \operatorname{Ai}(z+a) \operatorname{Ai}(z+b) dz =\frac{1}{2\sqrt{\pi x}} \exp\left( \frac{x^3}{12} - \frac{a+b}{2}
x - \frac{(a-b)^2}{4x} \right), \qquad x>0.
\end{gather*}
Its immediate corollary is (we use an agreement $z_{n+1}=z_1$ and $s_{n+1}=s_1$ here, and also assume $c_1,\dots,c_n>0$)
\begin{gather*}
 {\mathcal E}(c_1,\dots,c_n):=\int_{\mathbb R^n} e^{(c\cdot z)} \prod_{i=1}^n K_{\rm Airy}(z_i,z_{i+1}) dz= \frac{1}{2^n
\pi^{n/2}} \frac{e^{\sum c_i^3/12}}{\prod\limits_{i=1}^n \sqrt{c_i}}\\
\hphantom{{\mathcal E}(c_1,\dots,c_n):=}{}
\times \int_{s_1\ge 0} \cdots \int_{s_n\ge 0} \exp\left( - \sum_{i=1}^n \frac{(s_i-s_{i+1})^2}{4 c_i} -\sum_{i=1}^n \frac{s_i+s_{i+1}}{2} c_i \right)
\prod_{i=1}^n ds_i.
\end{gather*}
Using the Gaussian integrals in variables $z_1,\dots,z_n$, the last formula is converted into
\begin{gather}
 {\mathcal E}(c_1,\dots,c_n)= \frac{1}{(2\pi)^n} e^{\sum c_i^3/12} \int_{s_1\ge 0} ds_1 \cdots \int_{s_n\ge 0} ds_n
 \int_{z_1\in\mathbb R} d z_1 \cdots \int_{z_n\in\mathbb R} d z_n \nonumber\\
 \hphantom{{\mathcal E}(c_1,\dots,c_n)=}{} \times \exp\left( \sum_{i=1}^n\big( {-}c_i z_i^2 +\ii
(z_i-z_{i+1})s_i - (c_i+c_{i+1})s_i/2 \big)\right).\label{eq_x4}
\end{gather}
Since $\ii (z_i - z_{i+1})$ has zero real part, we can integrate over $s_i$ in~\eqref{eq_x4}, arriving at the formula:
\begin{gather}
 {\mathcal E}(c_1,\dots,c_n) = \frac{e^{\sum x_i^3/12}}{(2\pi)^n
} \int_{z_1\in\mathbb R} \cdots \int_{z_n\in\mathbb R} \exp\left( -\sum_{i=1}^n
c_i z_i^2 \right) \prod_{i=1}^n \frac{dz_i}{-\ii (z_i-z_{i+1}) + \frac{c_i+c_{i+1}}{2}}.\label{eq_x35}
\end{gather}
We can now write the formula for $R(c_1,\dots,c_n)$ (we subdivide a permutation into cycles, use~\eqref{eq_x35} and then combine back):
\begin{gather*}
R(c_1,\dots,c_n)=\int_{\mathbb R^n} d x_1\cdots dx_n \sum_{\sigma\in {\mathfrak
S}(n)} (-1)^{\sigma} \prod_{j=1}^n e^{x_j c_j} K_{\rm Airy}(x_j,x_{\sigma(j)})\nonumber\\
=\frac{e^{\sum c_i^3/12}}{(2\pi)^n } \int_{z_1\in\mathbb R} \cdots
\int_{z_n\in\mathbb R} \exp\left( -\sum_{i=1}^n c_i z_i^2 \right) \sum_{\sigma\in
{\mathfrak S}(n)} (-1)^{\sigma} \prod_{i=1}^n \frac{dz_i}{-\ii (z_i-z_{\sigma(i)}) +\frac{c_i+c_{\sigma(i)}}{2}}\nonumber\\
=\frac{e^{\sum c_i^3/12}}{(2\pi)^n } \int_{z_1\in\mathbb R} dz_1 \cdots
\int_{z_n\in\mathbb R} dz_n \exp\left( -\sum_{i=1}^n c_i z_i^2 \right)
 \det\left[\frac{1}{ \big({-}\ii z_i + \frac{c_i}{2}\big)+\big(\ii z_j+\frac{c_j}{2})}\right]_{i,j=1}^n .
\end{gather*}
\begin{Remark} \label{Remark_Cauchy} We can use the Cauchy determinant formula
\begin{gather*}
 \det\left[\frac{1}{a_i+b_j}\right]_{i,j=1}^n=\prod_{i=1}^n \frac{1}{a_i+b_i} \prod_{1\le i <j \le n}\frac{(a_i-a_j)(b_i-b_j)}{(a_i+b_j) (a_j+b_i)}
\end{gather*}
with $a_i=-\ii z_i + c_i/2$, $b_i=\ii z_i+c_i/2$ to simplify the last determinant.
\end{Remark}

We now take a partition $\lambda \vdash k$ with $\ell(\lambda)=n$, set $c_i=C\lambda_i$ and make a change of variables
\begin{gather*}
 \ii z_j= C w_j + \frac{C\lambda_j}{2}-\frac{C}{2}
\end{gather*}
to get (note that we deformed the contours to the imaginary axis; we do not pick up any residues in such a deformation)
\begin{gather}
 R(C\lambda_1,\dots,C\lambda_n)= \frac{\exp\Big( C^3 \sum\limits_{i=1}^n \lambda_i^3/12\Big)}{(2\pi \ii)^n } \int_{-\ii \infty}^{\ii \infty} dw_1 \cdots
\int_{-\ii \infty}^{\ii \infty} dw_n \nonumber \\
\hphantom{R(C\lambda_1,\dots,C\lambda_n)=}{} \times \exp\left( C^3 \sum_{i=1}^n
\lambda_i (w_i+\lambda_i/2-1/2)^2 \right) \det\left[\frac{1}{ w_j+\lambda_j - w_i}\right]_{i,j=1}^n.\label{eq_R_formula}
\end{gather}
It remains to simplify the exponents:
\begin{gather}
 \exp\left[ C^3 \sum_{i=1}^n \left(\frac{\lambda_i^3}{12} +\lambda_i\left(w_i+\frac{\lambda_i}2-\frac12\right)^2\right)\right]\nonumber\\
 \qquad {}= \exp\left[ C^3 \sum_{i=1}^n \left( \lambda_i w_i^2 +\frac{\lambda_i^3}{3}
 +\frac{\lambda_i}{4} +w_i\lambda_i^2 - w_i \lambda_i-\frac{\lambda_i^2}{2} \right)\right]\nonumber\\
\qquad{}=
 \exp\left[ C^3 \sum_{i=1}^n \left( \lambda_i w_i^2 + \lambda_i(\lambda_i-1) w_i +\frac{\lambda_i(\lambda_i-1)(2\lambda_i-1)}{6}
 -\frac{\lambda_i}{12} \right)\right]\nonumber\\
 \qquad {}= \exp\left[ C^3 \sum_{i=1}^n \big( w_i^2+(w_i+1)^2+\dots+(w_i+\lambda_i-1)^2 \big)-C^3 \frac{k}{12}\right]. \label{eq_exponents_simple}
\end{gather}
Combining \eqref{eq_h_expansion} with \eqref{eq_R_formula}, \eqref{eq_exponents_simple} and identifying $C^3=\frac{T}{2}$. we arrive at~\eqref{eq_Bose_moment_expanded} multiplied by $\exp(-kT/24)$.
\end{proof}

\subsection*{Acknowledgements}

A.B.~was partially supported by the NSF grants DMS-1056390 and DMS-1607901. V.G.~was partially supported by the NSF grant DMS-1407562 and by the Sloan Research Fellowship.

\pdfbookmark[1]{References}{ref}
\LastPageEnding

\end{document}